\theoremstyle{plain}
\newtheorem{theorem}{Theorem}
\theoremstyle{definition}
\theoremstyle{remark}
\newcommand{\E}{\mathbf{E}}
\newcommand{\p}{\mathbf{P}}
\newcommand{\N}{\mathbb{N}}
\newcommand{\bX}{\mathbf{X}}
\newcommand{\bq}{\mathbf{q}}
\newcommand{\ind}{\mathbb{I}}
\begin{document}

\title{Branching model with state dependent offspring 
distribution for Chlamydia spread}

\author{
P\'eter Kevei\thanks{
Bolyai Institute, University of Szeged, 
Aradi v\'ertan\'uk tere 1, 6720 Szeged, Hungary; 
e-mail: \texttt{kevei@math.u-szeged.hu}} \and  
M\'at\'e Szalai\thanks{
Bolyai Institute, University of Szeged, 
Aradi v\'ertan\'uk tere 1, 6720 Szeged, Hungary; 
e-mail: \texttt{szalaim@math.u-szeged.hu}}
}

\maketitle

\begin{abstract}
\emph{Chlamydiae} are bacteria with an interesting unusual
developmental cycle. A single bacterium in its infectious form 
(elementary body, EB) enters
the host cell, where it converts into its dividing form (reticulate body, RB),
and divides by binary fission. Since only the EB form is infectious, before 
the host cell dies, RBs start to convert into EBs. 
After the host cell dies RBs do not survive. We model the population 
growth by a 2-type discrete-time branching process, where the probability 
of duplication depends on the state. Maximizing the EB production leads 
to a  stochastic optimization problem. Simulation study shows 
that our novel model is able to reproduce the main features of the 
development of the population.
\end{abstract}

\section{Introduction}

\emph{Chlamydiae} are obligate intracellular bacteria which 
have a unique two-stage developmental cycle, with two 
forms, the elementary body (EB) and the 
reticulate body (RB). The EB is the infectious form and it is not 
capable to multiply. After infecting the 
host cell, the EB differentiates to RB. 
RBs multiply in the host cell by binary fission. 
After some time RBs redifferentiate to EBs. The EBs are then released from
the host cell ready to infect new host cells. 

This unique life-cycle triggered a lot 
of mathematical work to model the growth of the population. Wilson \cite{Wilson}
worked out a deterministic model taking into account the infected and 
uninfected host cells and the extracellular \emph{Chlamydia} concentration.
Wan and Enciso \cite{WanEnciso} formulated a deterministic model
for the quantities of RBs and EBs, and solved an optimal control problem  
to maximize the quantity of EBs when the host cell dies.
The same problem in a stochastic framework was investigated by 
Enciso et al.~\cite{Enciso} and Lee et al.~\cite{Lee}. 

RB divide repeatedly by binary fission, which expands the RB population. 
Then after a period of no conversion, RBs convert into EBs. 
It was shown recently by Lee et al.~\cite{Lee} 
using 3D electron microscopy method and manual counting that 
this conversion occurs asynchronously, so that some RBs are
converting into EBs, while others continue to divide.
Mathematical models suggested up to now are unable to reproduce this asynchronous
conversion, since both in the deterministic differential equation model 
in \cite{WanEnciso} and in the stochastic model in \cite{Enciso} the optimal 
conversion strategy is the so-called `bang-bang' strategy, that is, 
up to some time the population duplicates, then converts to EBs with the maximal 
possible rate.

Branching processes are well-known tools to model cell proliferation, see 
the  monographs by Haccou et al.~\cite{Haccou}, Kimmel and Axelrod 
\cite{Kimmel}.
In \cite{Enciso}, a continuous
time Markov chain model was introduced with time-dependent transition rates, and the 
cell death was assumed to be independent of the population process. 
Bogdanov et al.~\cite{Bogdanov} used a discrete-time Galton--Watson process 
to model \emph{chlamydia} growth in the presence of antibiotics.
Here we use 
a discrete-time branching process model, where the probability of duplication 
and the time of the cell death depends on the state of the process.
Finding the optimal conversion strategy leads to a stochastic optimization problem,
a so-called \emph{discrete-time Markov control process}, see 
e.g.~Hern\'andez-Lerma and Lasserre \cite{Hernandez}.
The only input of the process is a \emph{death-rate function} $d(x,y)$, which
determines the probability that the host cell dies if there are $x$ RBs and 
$y$ EBs. Simulation study shows that with a simple 
death-rate function our model is able to capture the real behavior
described recently in \cite{Lee}.

The paper is organized as follows. In section \ref{sect:model} we 
describe our model. In sections \ref{sect:indep} and \ref{sect:dep}
we analyze the cases when the host cell's death time is independent of, or 
depends on the process. The latter case is biologically more relevant.
Section \ref{sect:sim} contains a simulation study.

\section{The theoretical model} \label{sect:model}

Consider a two-type discrete-time Galton--Watson branching process 
$\bX^\pi = (\bX_n^\pi)_n =$ $(X_n^{\pi}, Y_n^{\pi})_n$, $n \geq 0$, together 
with a sequence of probabilities $\pi = (p_n)_n$. 
We assume that $\pi$ is 
adapted to the natural filtration $(\mathcal{F}_n)_n$ generated by $\bX$,
i.e.~$\mathcal{F}_n = \sigma ( \bX_k^\pi, k \leq n)$. 
Initially $\bX_0^\pi = (1,0)$, and the process evolves as
\begin{equation*} 
\begin{split}
& X_{n+1}^{\pi} = \sum_{i=1}^{X_n^{\pi}} \xi_{n,i}, \\
& Y_{n+1}^{\pi} = Y_n^{\pi} +\sum_{i=1}^{X_n^{\pi}} 
\left( 1 - \frac{\xi_{n,i}}{2} \right), \quad n\geq0, 
\end{split}
\end{equation*}
where $(\xi_n, \xi_{n,i}), n=1,2,\ldots, i=1,2,\ldots$ are 
conditionally independent random variables given $(p_n)_n$, 
for fix $n$ the variables $(\xi_n, \xi_{n,i}),i=1,2,\ldots$ are identically 
distributed, such that 
$\p(\xi_n=2 | p_n )=p_n$, $\p(\xi_n=0 | p_n )=1-p_n$. 

Here $X_n^{\pi}$ stands for the number of RBs and $Y_n^{\pi}$ for the number of 
EBs in generation $n$. In generation $n$ each RB duplicates with 
probability $p_n$ and converts into EB with probability $1-p_n$.
If $\xi_{n,i} = 2$, then
the $i$th RB in generation $n$ duplicates, while if $\xi_{n,i} = 0$ then
it converts to EB.
The process $\pi$, the sequence of duplication probabilities, 
is adapted to $(\mathcal{F}_n)_n$, which
intuitively means that based on the whole past of the process 
the population determines its duplication probabilities.
In what follows, we call the random process $\pi$ a \emph{strategy}.

For the conditional expectations we obtain
\begin{equation*} 
\begin{split}
\E [ ( X_{n+1}^\pi, Y_{n+1}^\pi ) | \mathcal{F}_n] 
& = 
(2 p_n X_n^\pi, Y_n^\pi + (1-p_n) X_n^\pi) \\
& = \bX_n^\pi
\begin{pmatrix}
2p_n & 1-p_n \\
0 & 1 
\end{pmatrix}.
\end{split}
\end{equation*}
If $p_n$ depends only on the actual state $(X_n, Y_n)$, then 
the process is Markovian.

The process ends at a random time $T \in \{1,2,\ldots\}$ when the infected 
host cell dies. The aim of the bacterial population is to produce as 
many EBs as possible, that is to maximize $\E ( Y_T^{\pi})$ over all 
possible strategies $(p_n)$.
Denoting by $\mathcal{P}$ the set of all strategies, 
a strategy $\bq$ is \emph{optimal}, if
\[
\sup_{\pi \in \mathcal{P}} \E (Y_T^{\pi}) = \E (Y_T^{\bq}).
\]
Note that we do not claim neither existence nor uniqueness, see the remark
after Theorem \ref{thm:indep}.

The cause of the host cell's death and the distribution of its time
is not yet well-understood. Experiments show the lysis times of 
different host cells varies between 48 and 72 hours post infection (hpi), 
see Elwell et al.~\cite{Elwell}. 
Here we consider two models. If $T$ is independent of the process than
we can calculate explicitly the optimal strategy, which turns out to be 
a deterministic `bang-bang' strategy. Depending on the distribution of 
$T$, the population 
doubles up to some deterministic time ($p_n = 1$), and then all the RBs convert 
to EBs immediately ($p_n = 0$). This phenomena is analogous to the findings in 
the continuous time setup in \cite{Enciso}, where independence of $T$
and $\bX^\pi$ was tacitly assumed. 
Therefore, this model cannot explain the asynchronous conversion.
In our second model we assume that the host cell dies at time $n$ with a 
certain probability depending on $\bX_n^\pi$, such that more bacteria imply 
higher death probability. In this more complex and more realistic model
we can determine the optimal strategy only numerically. We found that 
asynchronous conversion happens naturally. In simulations we 
obtained similar behavior as in real experiments in \cite{Lee}.

\section{Death time $T$ is independent of $\bX$} \label{sect:indep}

Assume that the host cell's death time $T$ is independent of the 
process $\bX^\pi$.
Introduce the notation $\pi_\ell = (1,1, \ldots, 1, 0, 0, \ldots)$,
where the first $\ell \geq 0$ components are 1.

\begin{theorem} \label{thm:indep}
Assume that $T \geq 1$ is bounded and it is independent of $\bX^\pi$.
Let $\ell$ be such that 
\begin{equation} \label{eq:T-ell}
2^{\ell} \p ( T > \ell ) = \sup_{k \geq 0} 2^{k} \p ( T > k).
\end{equation}
Then the optimal strategy is $\pi_\ell$ with optimal value
\[
\sup_{\pi \in \mathcal{P}} \E (Y_T^{\pi}) =  \sup_{k \geq 0} 2^{k} \p ( T > k).
\]
\end{theorem}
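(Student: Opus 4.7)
The plan is to reduce the problem to a scalar optimization over the deterministic sequence $\alpha_k := \E[X_k^\pi]$, which becomes trivial after one change of variables.

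First I would unfold the recursion for $Y$: iterating $Y_{n+1}^\pi = Y_n^\pi + \sum_{i=1}^{X_n^\pi}(1-\xi_{n,i}/2)$ with $Y_0^\pi = 0$ gives $Y_n^\pi = \sum_{k=0}^{n-1}\sum_{i=1}^{X_k^\pi}(1-\xi_{k,i}/2)$. Interchanging sum and expectation (all summands are non-negative), using that $T$ is independent of $\bX^\pi$, and conditioning on $\mathcal{F}_k$ (so $\E[1-\xi_{k,i}/2 \mid \mathcal{F}_k] = 1-p_k$), I obtain the key identity
\[
\E[Y_T^\pi] = \sum_{k=0}^\infty \p(T>k)\,\E[(1-p_k)X_k^\pi].
\]

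Next I rewrite the integrand in terms of $\alpha_k$. From $\E[X_{k+1}^\pi \mid \mathcal{F}_k] = 2p_k X_k^\pi$ we get $\alpha_{k+1} = 2\alpha_k - 2\E[(1-p_k)X_k^\pi]$, so $\E[(1-p_k)X_k^\pi] = \alpha_k - \alpha_{k+1}/2$. The only constraints on the sequence $(\alpha_k)$ coming from $0\le p_k\le 1$ are $\alpha_0 = 1$ and $0\le \alpha_{k+1}\le 2\alpha_k$. Setting $\gamma_k := \alpha_k/2^k$, these become $\gamma_0 = 1$ and $(\gamma_k)$ non-increasing and non-negative; moreover $\alpha_k - \alpha_{k+1}/2 = 2^k(\gamma_k - \gamma_{k+1})$. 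Therefore
\[
\E[Y_T^\pi] = \sum_{k=0}^\infty 2^k\p(T>k)\,(\gamma_k - \gamma_{k+1}).
\]
Since the weights $(\gamma_k - \gamma_{k+1})$ are non-negative with total mass $\gamma_0 - \lim_k \gamma_k \le 1$, this is a (sub-)convex combination of the values $2^k\p(T>k)$, so $\E[Y_T^\pi]\le \sup_{k\ge 0}2^k\p(T>k)$. Because $T$ is bounded, the supremum is over finitely many indices and is attained at some $\ell$.

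Finally I would check that $\pi_\ell$ achieves the bound: under $\pi_\ell$, $X_k^\pi = 2^k$ almost surely for $k\le \ell$, then $X_{\ell+1}^\pi = 0$ and $Y_n^\pi = 2^\ell$ for every $n\ge \ell+1$, so $\E[Y_T^{\pi_\ell}] = 2^\ell\p(T>\ell) = \sup_k 2^k\p(T>k)$. The only non-routine step is recognising that the change of variables $\gamma_k = \alpha_k/2^k$ converts the objective into a convex combination of the quantities $2^k\p(T>k)$; once that is in place the upper bound and the optimality of the bang-bang strategy $\pi_\ell$ are immediate, so I do not anticipate a genuine obstacle.
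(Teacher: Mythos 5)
Your proof is correct, and it takes a genuinely different route from the paper's. The paper argues by backward induction on the horizon: conditioning on $\mathcal{F}_{N-2}$ it shows the objective is affine in $p_{N-2}$, concludes that each optimal $p_i$ is a deterministic $0$ or $1$, and then compares the finitely many resulting bang-bang strategies $\pi_k$, for which $\E(Y_T^{\pi_k}) = 2^k \p(T>k)$ is computed directly. You instead work forward: the telescoping identity $\E (Y_T^\pi) = \sum_k \p(T>k)\,\E[(1-p_k)X_k^\pi]$, combined with the substitution $\gamma_k = \E (X_k^\pi)/2^k$, exhibits the objective as a sub-convex combination $\sum_k 2^k\p(T>k)(\gamma_k-\gamma_{k+1})$ with nonnegative weights summing to at most $1$, so the upper bound $\sup_k 2^k\p(T>k)$ holds for \emph{every} adapted (possibly randomized) strategy in one stroke, and $\pi_\ell$ visibly attains it. Your argument buys several things: it sidesteps the somewhat delicate iteration step in the paper (propagating the ``affine in $p_i$, hence extremal'' argument backwards through all stages), it immediately explains the remark after the theorem (if $2^k\p(T>k)\to\infty$ the sub-convex combination shows no strategy is optimal, and non-uniqueness of $\ell$ corresponds to freedom in where the weight $\gamma_k-\gamma_{k+1}$ may be placed), and it does not even require $T$ bounded for the upper bound, only for attainment. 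What the paper's dynamic-programming viewpoint buys instead is continuity with Section 4: the backward-induction scheme is exactly what survives when $T$ depends on the state and the forward decoupling $\E[\ind(T>k)Z_k]=\p(T>k)\E[Z_k]$ breaks down. The only points worth making explicit in a polished write-up are the ones you already flag: Tonelli for the interchange, and the fact that $T$ is independent of the entire process (including the increments $\xi_{k,i}$), which is what licenses factoring out $\p(T>k)$.
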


There are distributions such that $\lim_{k \to \infty} 2^k \p( T > k) = \infty$,
in which case it is easy to see that there is no optimal strategy.
Furthermore, one can construct distributions for which $\ell$ in
\eqref{eq:T-ell} is not unique, showing that the optimal strategy is not 
necessarily unique.

\begin{proof}
To ease notation we suppress $\pi$.
Since $T \leq N$, for some $N$, in the optimal strategy $p_{N-1} = 0$. Next,
using the independence of $T$ and $X$
\[
\begin{split}
& \E \left[ Y_T | T > N - 2, \mathcal{F}_{N-2} \right] \\
& = 
Y_{N-2} + 2 p_{N-2} X_{N-2} \p ( T = N | T > N - 2) +
X_{N- 2 } ( 1 - p_{N - 2} ) \\
& =
Y_{N-2} +  X_{N-2} \left( 1 + p_{N-2} ( 2 \p ( T = N | T > N - 2) -1) \right).
\end{split}
\]
Thus, $p_{N-2}$ is either 0 or 1, and its value 
only depends on the distribution of $T$.
Iteration gives that the optimal strategy is deterministic, and each 
$p_i$ is 0 or 1.

This means that the population doubles up to generation $k$, then all the RBs
convert to EBs. These strategies are easy to compare. Under $\pi_k$
simply $Y_T = \ind ( T \geq k+1) 2^{k}$, with $\ind$ standing for the 
indicator function, thus 
\[
\E (Y_T) = \p ( T > k) 2^{k}.
\]
Taking the maximum in $k$, 
we obtain that $\pi_\ell$ is indeed the optimal strategy.
\end{proof}

\section{Death time $T$ depends on $\bX$} \label{sect:dep}

Here we assume that $T$, the death time depends on the process $\bX^\pi$.
Given that the host cell is alive in generation $n-1$, the probability that 
it dies in the next step is $d(X_n^\pi,Y_n^\pi)$, that is 
\[
\p ( T = n | T> n-1, \mathcal{F}_n ) = d(X_n^\pi,Y_n^\pi).
\]
The deterministic \emph{death-rate} function $d$ describes the effect of
RBs and EBs to cell's death. It is not clear which type is more harmful to 
the host cell, since RB particles are larger, while EB particles secrete chemicals
poisoning the host cell, see e.g.~\cite{Elwell}.
Assume that 
\begin{equation} \label{eq:d-ass}
\exists \ C > 0 \ \text{ such that }
\ d(x,y) = 1 \ \text{ whenever } \ x + y \geq C. 
\end{equation}
That is, if the total number of bacteria exceeds $C$ the host 
cell necessarily dies. This is biologically a natural assumption.

In this scenario the process is a special \emph{discrete-time Markov 
control process} (or Markov decision process). For theory and properties 
of these processes we refer to the monograph by 
Hern\'andez-Lerma and Lasserre \cite{Hernandez}.
To see that our model fits in the theory we slightly modify our process. 
Recall that $X_n = X_n^\pi$ depends on the strategy $\pi$, however 
for notational ease we suppress the upper index.
Let $\widetilde X_n = X_n \ind( T > n)$, 
$\widetilde Y_n = Y_n \ind (T \geq n)$.
Note that $\widetilde X_T = 0$, $\widetilde Y_T = Y_T$, and 
$\widetilde Y_{T+1} = 0$, which is convenient at the definition
of the reward function in \eqref{eq:reward}.
Then the state space is $\N \times \N$, the control set,  
the set of possible duplication probabilities, is $[0,1]$ for any 
state, and the transition probabilities are, for $x > 0$, $y \geq 0$,
\begin{equation} \label{eq:trans-prob}
\begin{split}
& \p \left( \widetilde X_{n+1} = 2 j, \widetilde  Y_{n+1} = y + x - j | 
\widetilde  X_n = x, \widetilde  Y_n = y, p_n = p \right) \\
& = \binom{x}{j} p^j (1-p)^{x-j} (1 - d(2j, y + x -j)), \quad j= 1,\ldots, x, \\
& \p \left( \widetilde  X_{n+1} = 0, \widetilde  Y_{n+1} = y + x - j | 
\widetilde  X_n = x, \widetilde  Y_n = y,  p_n = p \right) \\
& = \binom{x}{j} p^j (1-p)^{x-j} d(2j, y + x -j), \quad j= 1,\ldots, x, \\
& \p \left( \widetilde  X_{n+1} = 0, \widetilde  Y_{n+1} = y + x  |
\widetilde  X_n = x, \widetilde  Y_n = y,  p_n = p \right) \\
& = (1-p)^{x}, 
\end{split}
\end{equation}
while if $x = 0$
\[
\p( \widetilde X_{n+1} = 0, \widetilde Y_{n+1} = 0 |
\widetilde X_n = 0,  \widetilde Y_n = y, p_n = p ) = 1.
\]
The first two formulae correspond to the possibility that $j \geq 1$ bacteria 
duplicate (with probability $\binom{x}{j} p^{j} (1-p)^{x-j}$)
and the host cell remains alive, or die, while the third formula 
corresponds to the possibility that all the RBs convert to EBs, and in this 
case it does not matter whether the host cell dies or not. The fourth equation states
that $(0,0)$ is the unique absorbing state, which is a convenient condition for 
the form of the reward function.

The \emph{reward function} ($-1$ times the cost function in \cite{Hernandez})
gives the number of EBs upon cell's death, that is
\begin{equation} \label{eq:reward}
c(x,y) =
\begin{cases}
y, & x = 0, \\
0, & \text{otherwise}
\end{cases}
\end{equation}
Define the \emph{value function} 
\begin{equation} \label{eq:h}
h(x,y ) = 
\begin{cases}
\sup_{\pi \in \mathcal{P}} 
\E \left[ \sum_{n=0}^\infty c(\widetilde X_n, \widetilde Y_n)  |  
(\widetilde X_0, \widetilde Y_0 ) = (x, y) \right], 
& d(x,y) < 1, \\
y, & d(x,y) = 1.
\end{cases}
\end{equation}
which is the optimal number of expected EBs upon host cell's death, given 
that the host cell is alive and $(\widetilde X_0 , \widetilde Y_0) = (x,y)$, 
if $d(x,y) < 1$. 
If $d(x,y) = 1$ then the cell cannot be alive at state $(x,y)$, 
thus the reward is $y$. Clearly $h(0,y) = y$. 
Note that since $(0,0)$ is the only absorbing state,
in the infinite sum in \eqref{eq:h} there is only one non-zero term.

We are looking for the value $h(x,y)$ and the optimal strategy $\pi$.
This stochastic optimization problem is in fact a finite-horizon problem, 
see \cite[Chapter 3]{Hernandez}. Indeed, from any state 
$(\widetilde X_n, \widetilde Y_n ) = (x,y)$ either the total number of 
bacteria increases ($j \geq 1$ and the host cell survives in \eqref{eq:trans-prob}),
or $\widetilde X_{n+1} = 0$, meaning that the cell dies. Therefore, 
by condition \eqref{eq:d-ass} from any initial state $(x,y)$ the process 
reaches the absorbing state $(0,0)$ in at most $C +1$ steps. So in \eqref{eq:h}
in the summation the upper limit can be changed to $C$.
Using Theorem 3.2.1 in \cite{Hernandez} both the value function $h$ 
and the optimal strategy can be determined by backward induction
on time. In our setup, backward induction on the total number of bacteria is 
more natural, and this goes as follows.

\begin{theorem} \label{thm:backward}
Assume that \eqref{eq:d-ass} holds.  Then $h(x,y) = y$ if $x + y \geq C$, and 
$h(0,y) = y$ for any $y$. Assume that $h(x,y)$ is determined whenever $x+y \geq m$ for 
some $m \leq C$, and let $x + y = m-1$. Then
\begin{equation} \label{eq:h-recursion}
\begin{split}
& h(x,y)  = \max_{p \in [0,1]} 
\sum_{j=0}^x \binom{x}{j} p^j (1-p)^{x-j} \\
& \times 
\left[ 
d(2j, y+x-j) (y+x - j) + (1-d(2j, y+x - j)) h(2j, y+x - j) 
\right],
\end{split}
\end{equation}
where all the values of $h$ on the right-hand side are determined.
The maximum in $p$ of the continuous function on the right-hand side 
of \eqref{eq:h-recursion} is attained at $p(x,y)$,
which gives the optimal strategy.
\end{theorem}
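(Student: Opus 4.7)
The plan is to prove this by backward induction on the total population size $m = x+y$, starting from the boundary cases and working down to $m = 0$. The boundary cases are essentially definitional: if $x+y \geq C$, then by \eqref{eq:d-ass} we have $d(x,y)=1$, so $h(x,y)=y$ by the second branch of \eqref{eq:h}; and $h(0,y)=y$ because $(0,y)$ is absorbing in the modified process (the transitions out of $\widetilde X = 0$ send the state to $(0,0)$ with no further reward), so only the single term $c(0,y) = y$ contributes to the sum in \eqref{eq:h}.

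The main engine is a one-step analysis using the transition probabilities in \eqref{eq:trans-prob}. Fix $(x,y)$ with $x \geq 1$ and $x+y = m-1 < C$, and fix the first-step control value $p \in [0,1]$. Conditioning on the number $j$ of RBs that duplicate, the next state is either $(2j, y+x-j)$ (if the cell survives, probability $\binom{x}{j}p^j(1-p)^{x-j}(1-d(2j,y+x-j))$) giving continuation value $h(2j, y+x-j)$, or the absorbing state is reached immediately with accumulated reward $y+x-j$ (probability involving $d(2j,y+x-j)$). The critical observation is that whenever $j \geq 1$, the new total equals $2j + (y+x-j) = x+y+j \geq m$, so the inductive hypothesis makes $h(2j, y+x-j)$ available; and for $j=0$ the next state is the absorbing $(0, y+x)$, where $h(0, y+x) = y+x$ also from the boundary case. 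Hence the bracketed expression in \eqref{eq:h-recursion} is already determined for every $j$.

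To convert this one-step formula into the characterization of $h(x,y)$, I invoke the dynamic programming principle for the finite-horizon Markov control process, as already reduced in the text via condition \eqref{eq:d-ass} (the horizon is at most $C+1$ because the total bacterial count is strictly monotone until absorption). Concretely, I would apply Theorem 3.2.1 of Hern\'andez-Lerma and Lasserre~\cite{Hernandez}, which says that the value function satisfies the Bellman equation $h = \max_p \{\text{expected immediate reward} + \text{expected continuation value}\}$, and that the maximizer $p(x,y)$ realizes the optimal stationary strategy. Packaging the immediate reward together with the continuation value as in \eqref{eq:h-recursion} then yields the stated formula.

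Finally, for the attainment of the maximum: the right-hand side of \eqref{eq:h-recursion}, as a function of $p$, is a polynomial of degree at most $x$ on the compact interval $[0,1]$, hence continuous, so the supremum is achieved. The main obstacle I anticipate is a clean bookkeeping argument showing that the $j=0$ contribution from \eqref{eq:h-recursion} (which algebraically pairs death and survival) actually coincides with the single ``all-convert'' transition of probability $(1-p)^x$ in \eqref{eq:trans-prob}; this reconciles because $h(0,y+x) = y+x$ forces both subcases of $d(0,y+x)$ to yield the same reward $y+x$. Beyond that, the proof is essentially a verification that our setup satisfies the hypotheses of the cited dynamic programming theorem.
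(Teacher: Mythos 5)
Your proposal is correct and follows essentially the same route as the paper's proof: the boundary cases are read off from the definition \eqref{eq:h}, the recursion \eqref{eq:h-recursion} comes from the Markov structure and the transition probabilities \eqref{eq:trans-prob}, and the key observation is that a transition to $(2j, y+x-j)$ with $j \geq 1$ raises the total bacterial count to $x+y+j \geq m$ so the induction hypothesis applies, while $j=0$ lands in the case $h(0,x+y)=x+y$. You spell out a few details the paper leaves implicit (the reconciliation of the $j=0$ term with the $(1-p)^x$ transition, and compactness of $[0,1]$ for attainment of the maximum), but the substance is identical.
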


\begin{proof}
From definition \eqref{eq:h} we see that $h(x,y) = y$ if $x +y \geq C$ or
$x  = 0$. Formula \eqref{eq:h-recursion} follows from the Markovian structure
and from the transition probabilities in \eqref{eq:trans-prob}.

Indeed, from $(x,y)$, $x+y = m-1$, 
the possible states in \eqref{eq:h-recursion} are $(2j, y+x-j)$, $j=0,1,\ldots, x$,
with total number of bacteria $x+y+j$, therefore $h(2j,y+x-j)$ is determined
for $j \geq 1$ by the induction assumption, and for $j= 0$ by definition
$h(0,x+y) = x+y$. Thus all the quantities in \eqref{eq:h-recursion} are known,
so $h(x,y)$ can be calculated.
\end{proof}

\section{Simulation studies} \label{sect:sim}

For a given death-rate function $d$, we can determine 
numerically the value function and the optimal strategy using 
Theorem \ref{thm:backward}. Then, the process is a simple 
Galton--Watson branching process with state-dependent 
offspring distribution, which can be simulated easily.
In each examples below 
the empirical mean of RBs and EBs are calculated from 1000 simulations.

\smallskip

First we consider a simple threshold death-rate function, that is 
for some $C > 0$
\begin{equation} \label{eq:death0}
d_1(x,y) = d_1(x+y) = 
\begin{cases}
1, & \text{ if } x + y \geq C, \\
0, & \text{ otherwise.}
\end{cases}
\end{equation}
This is a simple, but biologically very unnatural death-rate function.
In this case the bacteria typically start to convert to EBs at an early 
stage, and after a lot of generation it reaches the optimal bound $C-1$.

\begin{figure} 
\begin{center}
\includegraphics[width = 0.9\textwidth]{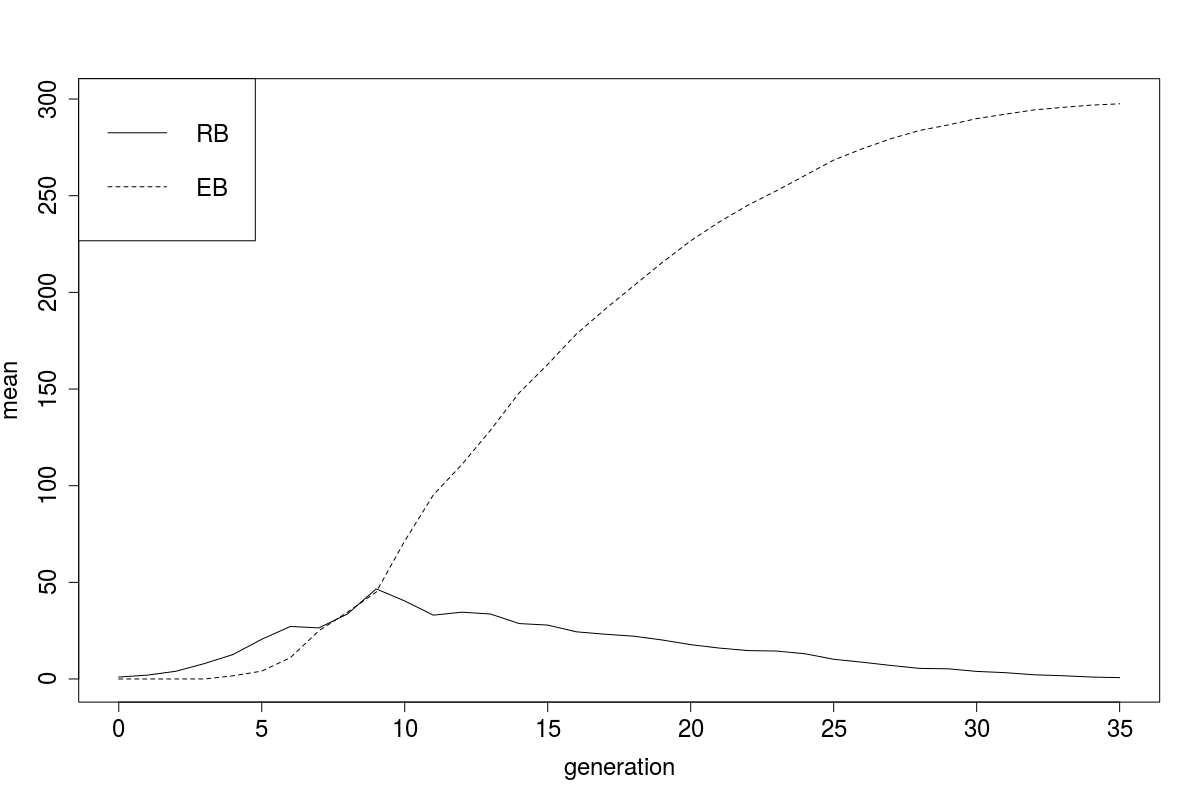}
\end{center}
\vspace*{-0.7cm}
\caption{The mean number of EBs and RBs for the death-rate \eqref{eq:death0} with
$C = 300$.}
\label{fig:mean-d0}
\end{figure}

For simulations we choose $C = 300$. The  value function is almost 
constant 299 with $h(1,0) = 298.7$. In 
Figure \ref{fig:mean-d0} we see that 
the number of RBs is typically small, while the number of EBs starts 
to increase at an early stage. In Figure \ref{fig:sim-d0} 
there are six trajectories of the process. On Figure \ref{fig:p} 
(top left) we see the numerical $p$ values. The structure of the 
death-rate function causes the discontinuity of the $p$ function. Note 
e.g.~that $p(x,y) \equiv 1$ on the line $\{ (x,y) : 2 x + y = 299 \}$, since 
after one duplication the population reaches the maximum possible value $299$.

\begin{figure} 
\begin{center}
\includegraphics[width = 0.9\textwidth]{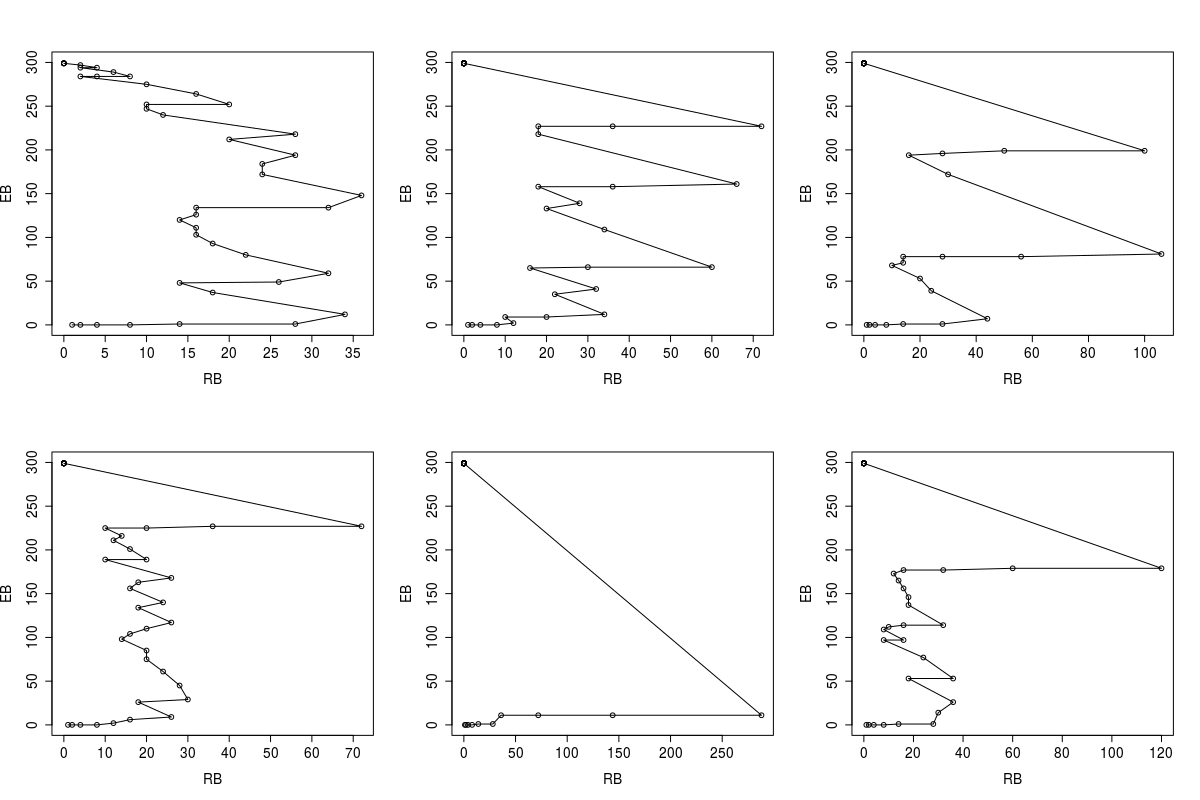}
\end{center}
\vspace*{-0.7cm}
\caption{Simulations of the process with death-rate \eqref{eq:death0} and
$C = 300$.}
\label{fig:sim-d0}
\end{figure}

Consider a smoother death-rate function 
\begin{equation} \label{eq:death2}
d_2(x,y) = 
\begin{cases}
1 - e^{-c_0 ( \alpha x + \beta y)}, & x + y \leq C -1, \\
1, & x + y \geq C.
\end{cases}
\end{equation}
If the total number of bacteria is small, then
it is unlikely that the host cell dies. The parameters $\alpha, \beta$
allows us to tune the relative effect of EBs and RBs on the host cell's death.
On the one hand RBs are much larger than EBs suggesting $\alpha > \beta$,
on the other hand EBs secrete chemicals enhancing cell death.
Note that biological experiments suggests that 
\emph{chlamydia} controls host cell survival, see \cite[p.~392]{Elwell}.
We tried three scenarios, with $c_0 = 0.0003$ in each cases and  
$(\alpha, \beta) = (1,3)$, $(2,2)$, and $(3, 1)$, with 
$C = 2500$, $1500$, $3000$, respectively. 
We chose $C$ large enough, so that the optimal strategy does not depend on
its actual value. The rationale of choice of the 
different threshold values $C$ can be seen from Figure \ref{fig:p}.
For the empirical mean of 1000 simulations and some typical trajectories see 
Figures \ref{fig:mean-d1-1} -- \ref{fig:sim-d1-3}.

\begin{figure} 
\begin{center}
\includegraphics[width = 0.9\textwidth]{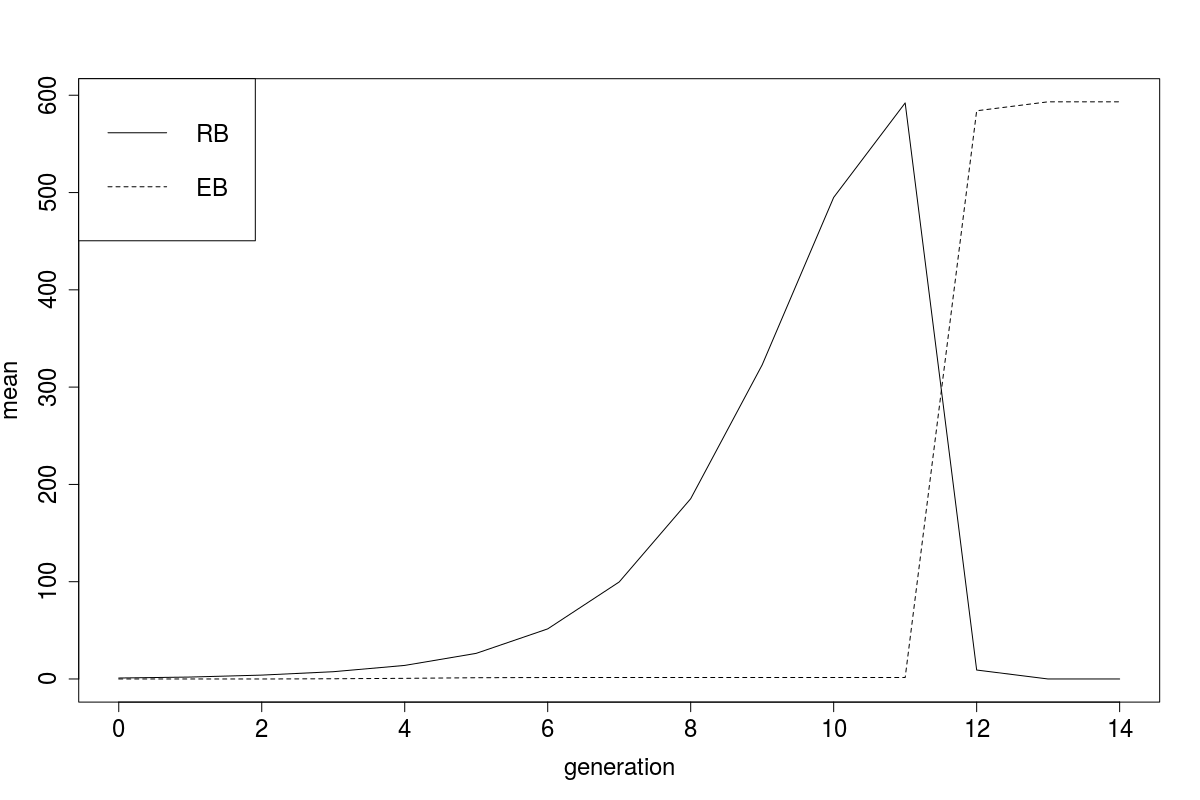}
\end{center}
\vspace*{-0.7cm}
\caption{The mean number of EBs and RBs for the death-rate \eqref{eq:death2},
with $(\alpha, \beta) = (1,3)$.}
\label{fig:mean-d1-1}
\end{figure}

\begin{figure} 
\begin{center}
\includegraphics[width = 0.9 \textwidth]{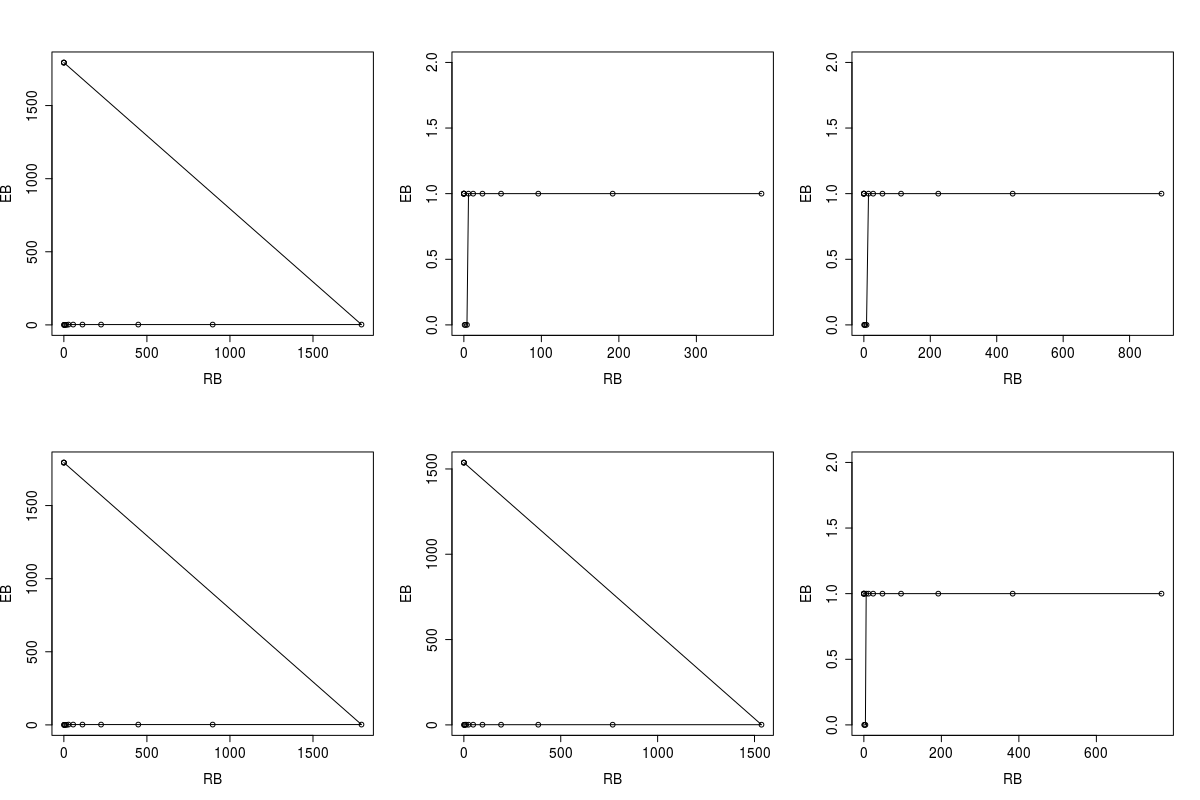}
\end{center}
\vspace*{-0.7cm}
\caption{Simulations of the process with death-rate \eqref{eq:death2},
with $(\alpha, \beta) = (1,3)$.}
\label{fig:sim-d1-1}
\end{figure}

A quick look on the mean functions
and on some trajectories of the processes 
reveals that the population behaves 
very differently. 
For $(\alpha, \beta) = (1,3)$ the relative effect of EBs on cell-death
is much larger. Therefore, the process prefer to have only RBs up to 
some point (generation 11), and then all RBs convert to EBs immediately, resulting
an `all or nothing' strategy.
The exponential increase of RBs and the sudden change is clearly visible 
both on the means (Figure \ref{fig:mean-d1-1}), and on the trajectories 
(Figure \ref{fig:sim-d1-1}). 
Here $h(1,0) = 605$.
The optimal $p$ values on Figure \ref{fig:p}
(top right) show the same pattern: in each state 
either all cells duplicate ($p=1$), or all cells convert ($p= 0$).

\begin{figure} 
\begin{center}
\includegraphics[width = 0.9\textwidth]{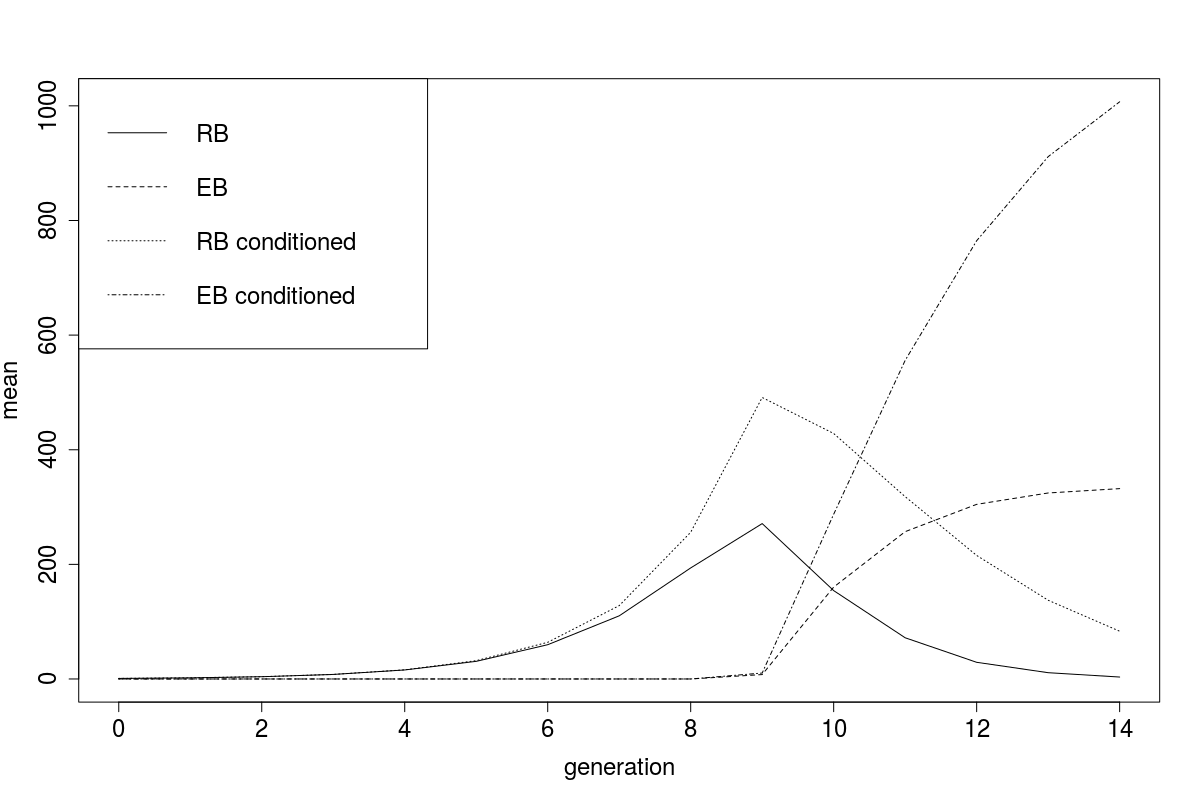}
\end{center}
\vspace*{-0.7cm}
\caption{The mean and conditional mean of EBs and RBs for the death-rate 
\eqref{eq:death2}, with $(\alpha, \beta) = (2,2)$.}
\label{fig:mean-d1-2}
\end{figure}

\begin{figure} 
\begin{center}
\includegraphics[width = 0.9\textwidth]{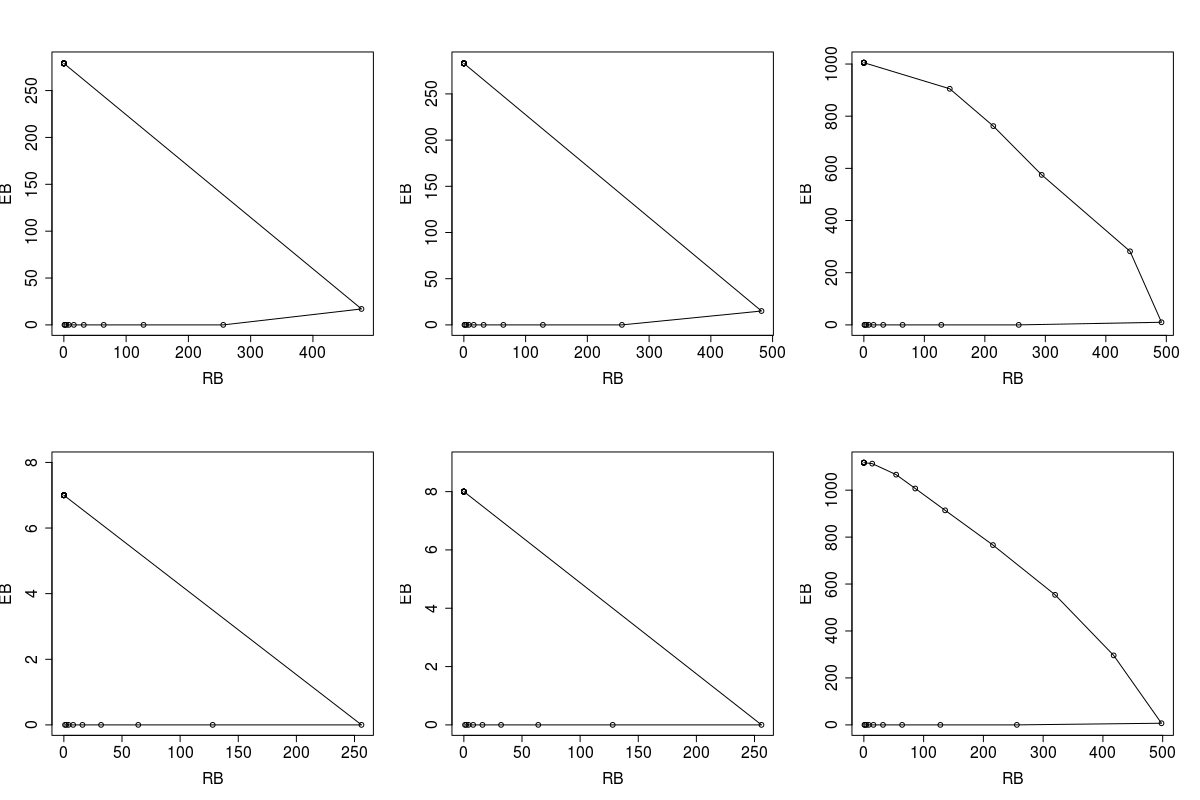}
\end{center}
\vspace*{-0.7cm}
\caption{Simulations of the process with death-rate \eqref{eq:death2},
with $(\alpha, \beta) = (2,2)$.}
\label{fig:sim-d1-2}
\end{figure}

For $(\alpha, \beta) = (2,2)$ the relative 
effect of RBs and EBs is the same. The RBs duplicate and increase exponentially
up to generation 9, then they start to convert to EBs.
In Figures \ref{fig:mean-d1-2} and \ref{fig:sim-d1-2} 
we see that in generations 9--12 the EBs and RBs simultaneously appear,
showing the asynchronous conversion obtained in real experiments in 
\cite{Lee}. Here $h(1,0) = 324$.

\begin{table}[h!]
\centering
\begin{tabular}{c||cccccccc}
gen & 0   & 3   & 5  & 7 & 10  & 11 & 12  & 13  \\ \hline
hpi & 12  & 16  & 20  & 24 & 28  & 32 & 36 & 40 \\ \hline
RB  & 1  &  8 & 32 & 128 & 430 & 318 & 213 & 142 \\
EB  & 0  & 0  & 0  & 0 & 287 & 559  & 770  & 910    \\
\hline \hline 
RB  & 1.3 & 7.6 & 34 & 105 & 385 & 507 & 271 & 171 \\
EB  & 0   & 0   & 0  & 3.7 & 192 & 656 & 706 & 751 \\
\end{tabular}
\caption{Conditional means for $(\alpha, \beta) = (2,2)$ in our simulations,
and real data from Lee et al.~\cite{Lee} (last two rows)}
\end{table} \label{tab:1}

In Figure \ref{fig:mean-d1-2}
we also plotted the empirical means of the EBs and RBs conditioned on 
that the host cell is alive. 
In the real experiment in Lee et al.~\cite{Lee} only those 
inclusions are counted where the host cell is alive. This 
clearly causes a bias. We can transform the generation time to 
real time, hours-post-infection (hpi). After the EB enters the host cell,
it takes about 12 hours to convert to RB and to start to duplicate.
Between 12 and 24 hpi the doubling time is about 1.8 hours, and 
around 28 hpi RB-EB conversion starts, see \cite[p.~2]{Lee}.
In Table \ref{tab:1} we copied the measurements from \cite{Lee} to 
see that our model captures extremely well the experimental data.

For the optimal $p$ values 
in Figure \ref{fig:p} (bottom left) we do see values other than 0 and 1. 
There are
no big jumps in the $p$ values, which makes it biologically relevant.

\begin{figure} 
\begin{center}
\includegraphics[width = 0.9\textwidth]{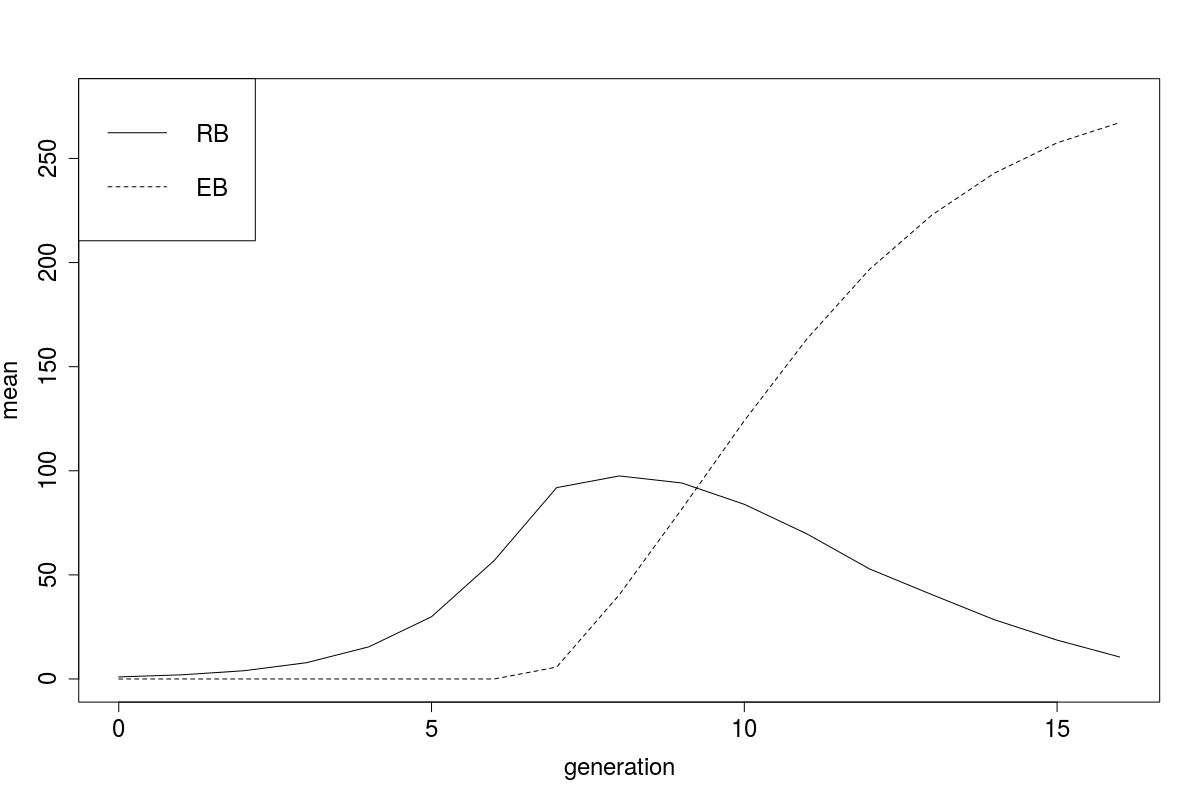}
\end{center}
\vspace*{-0.7cm}
\caption{The mean number of EBs and RBs for the death-rate \eqref{eq:death2},
with $(\alpha, \beta) = (3,1)$.}
\label{fig:mean-d1-3}
\end{figure}

\begin{figure} 
\begin{center}
\includegraphics[width = 0.9\textwidth]{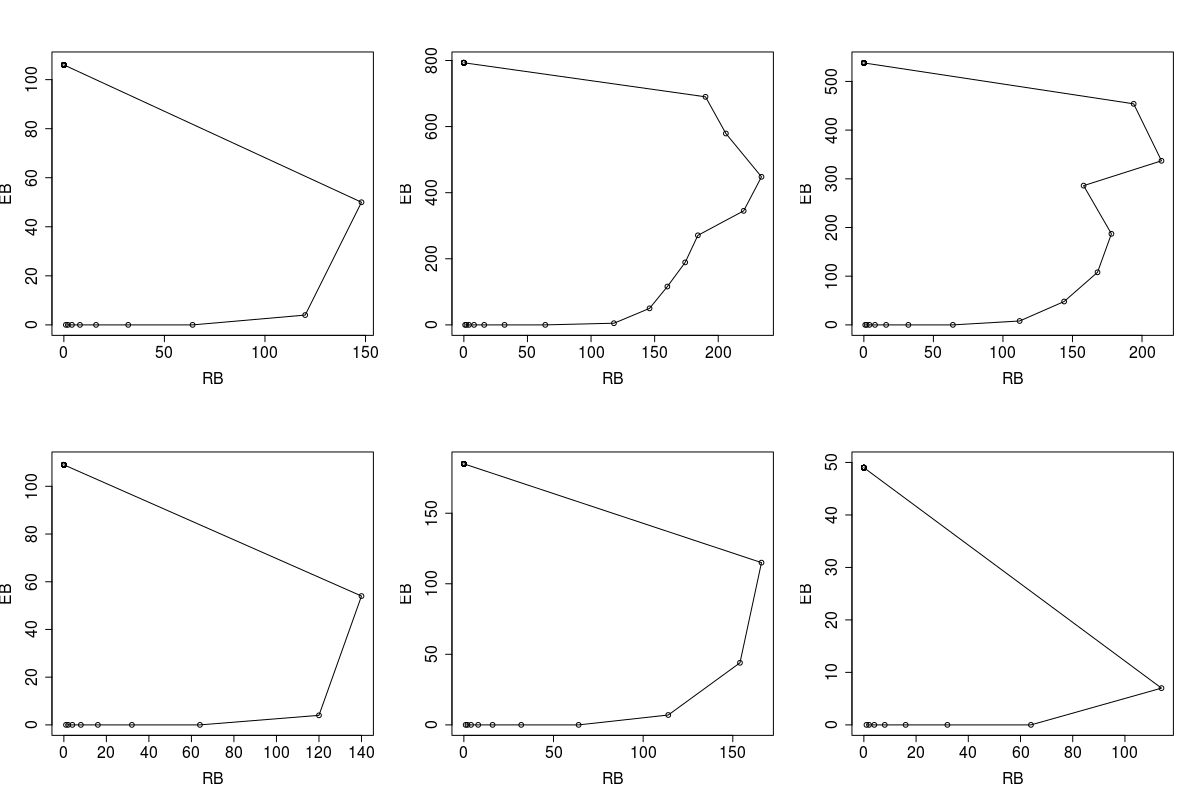}
\end{center}
\vspace*{-0.7cm}
\caption{Simulations of the process with death-rate \eqref{eq:death2},
with $(\alpha, \beta) = (3,1)$.}
\label{fig:sim-d1-3}
\end{figure}

Finally, for $(\alpha, \beta) = (3,1)$ the relative 
effect of RBs is much larger, which implies a shorter period of exponential
increase in the RB population, and a longer coexistence of RB and EB population,
see Figures \ref{fig:mean-d1-3} and \ref{fig:sim-d1-3}.
These results suggest that the effect of RBs on host cell's death is larger,
or at least as large as the effect of EBs.
Here $h(1,0) = 285.7$. The $p$ values in Figure \ref{fig:p} are even smoother than
in the previous case, and the population prefers to have not too many RBs.

\begin{figure} 
\centering
\hspace*{-0.5cm}
\begin{subfigure}{0.49\textwidth}
\includegraphics[width = \textwidth]{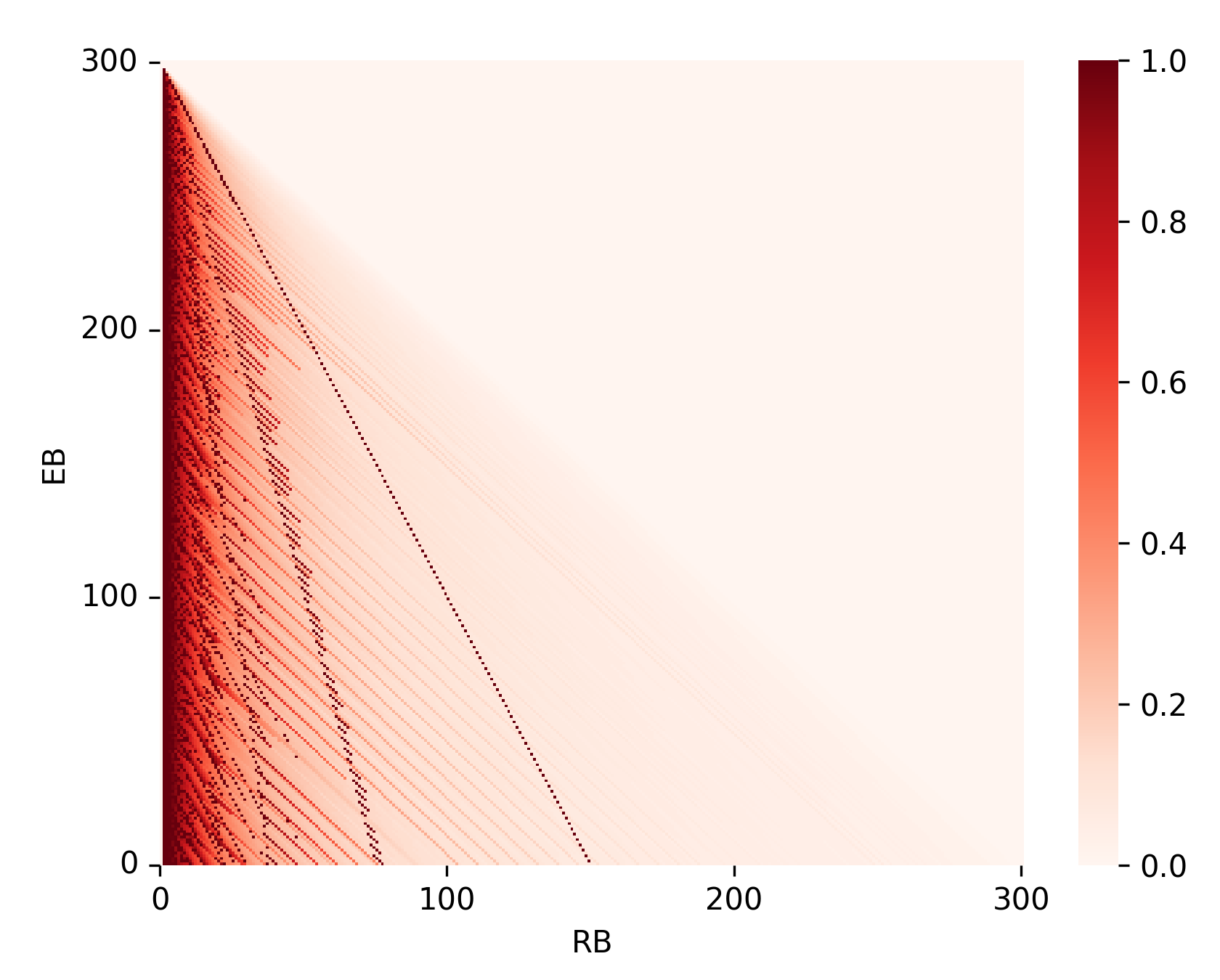} 
\end{subfigure}
\begin{subfigure}{0.49\textwidth}
\includegraphics[width = \textwidth]{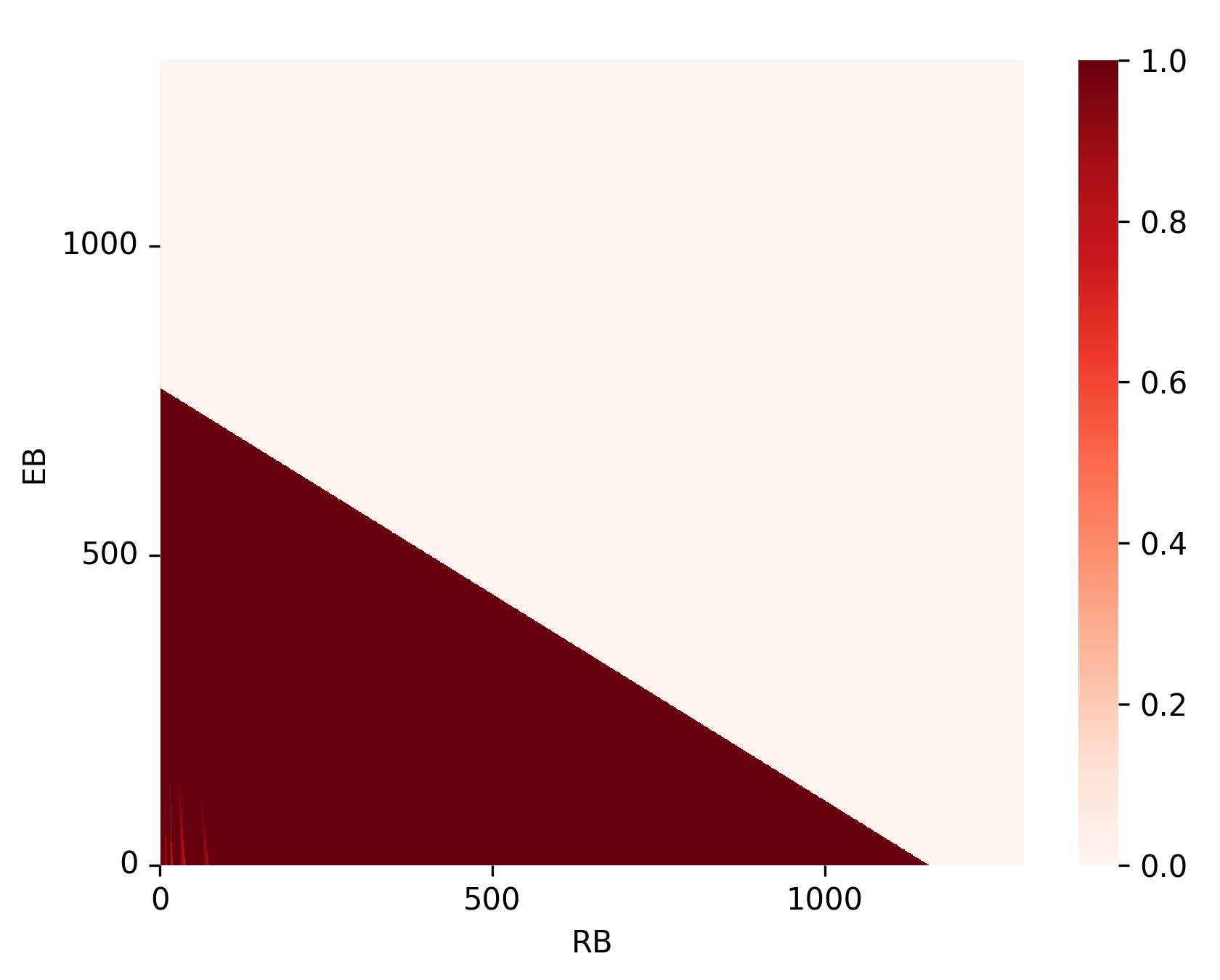} 
\end{subfigure} 
\hspace*{-0.5cm}
\begin{subfigure}{0.49\textwidth}
\includegraphics[width = \textwidth]{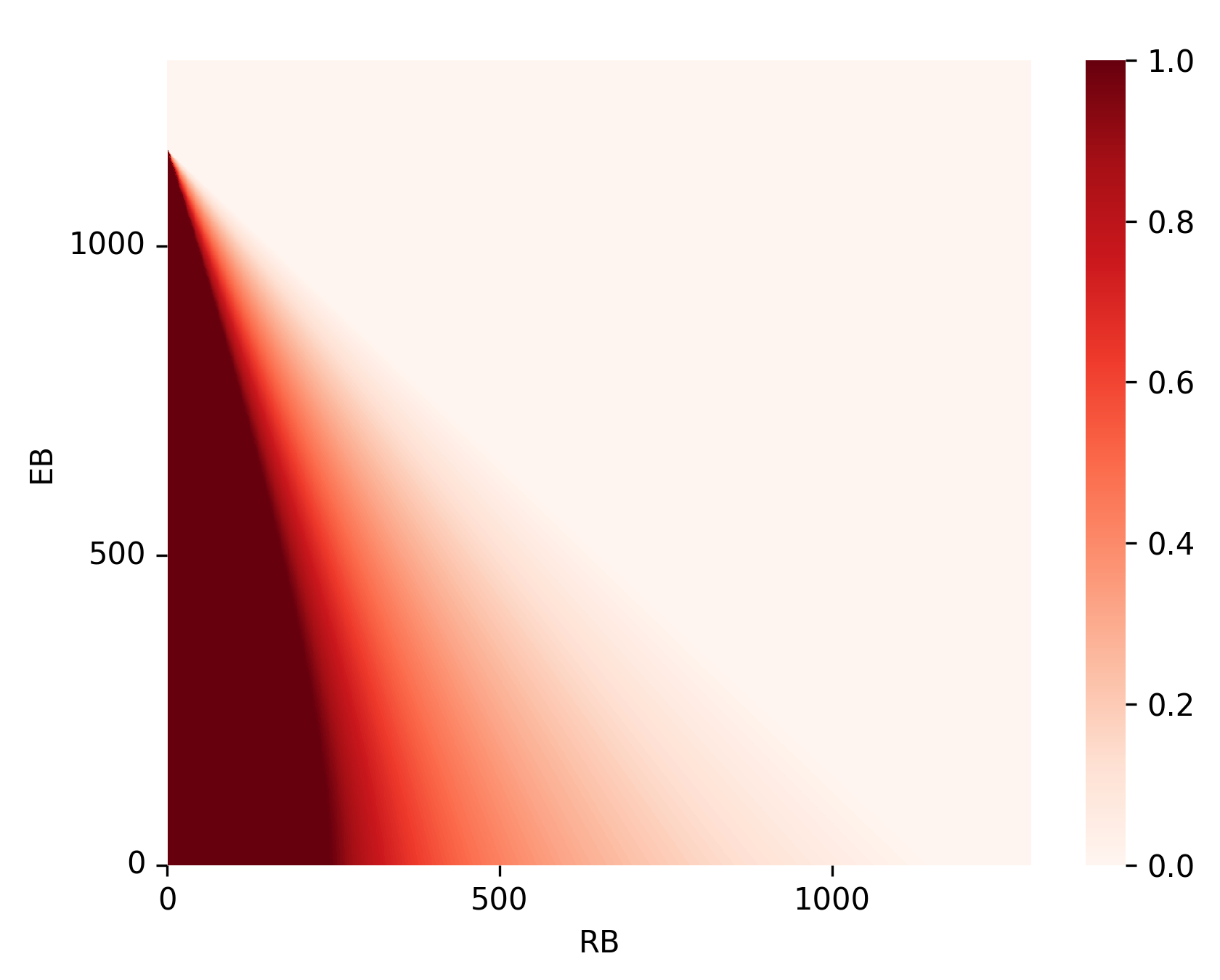} 
\end{subfigure}
\begin{subfigure}{0.49\textwidth}
\includegraphics[width = \textwidth]{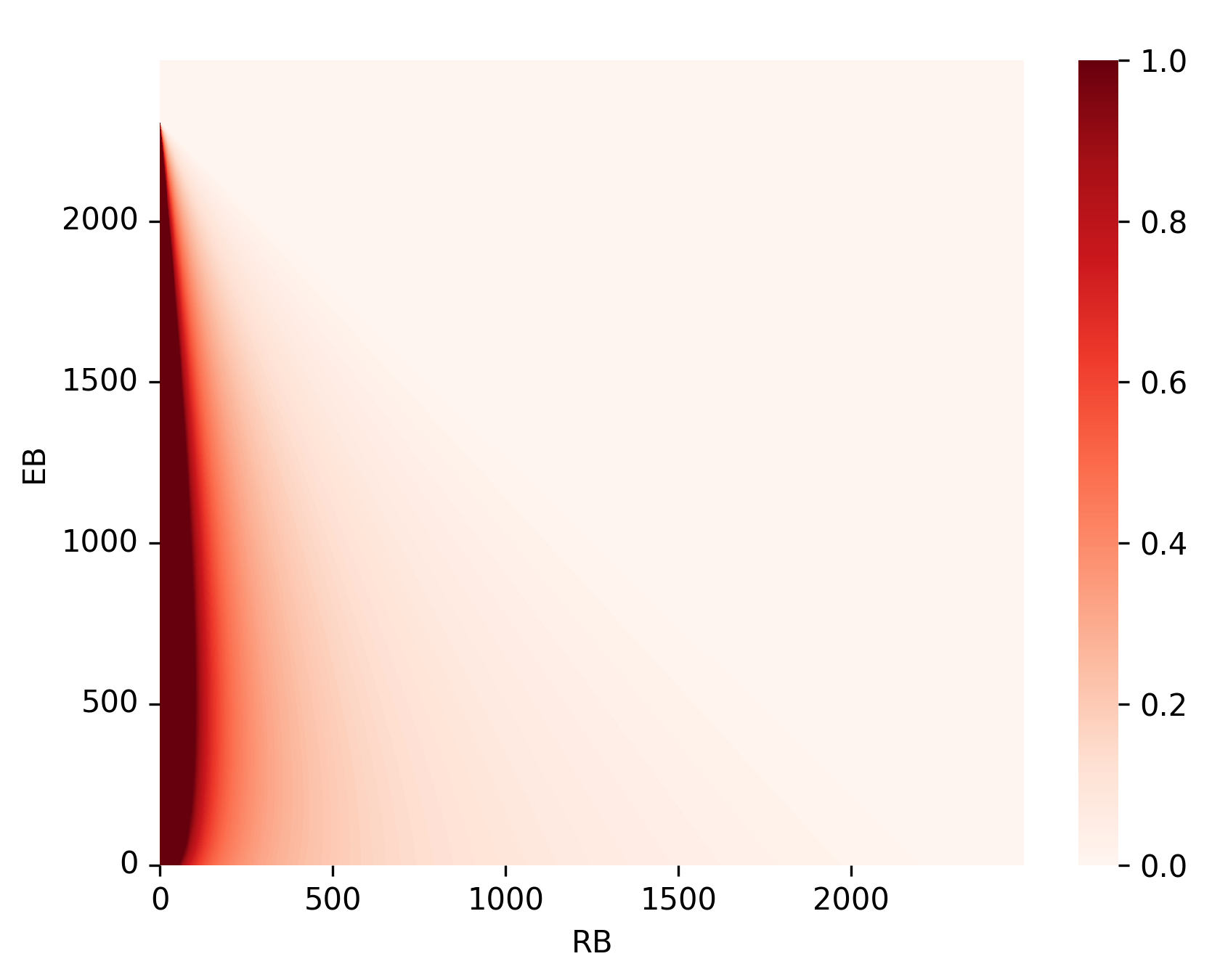} 
\end{subfigure}
\caption{The numerical $p$ values corresponding to  
the death-rate function \eqref{eq:death0} with $C = 300$ (top left), 
and to \eqref{eq:death2} with $(\alpha, \beta) = (1,3)$ (top right), 
$(\alpha, \beta) = (2,2)$
(bottom left), $(\alpha, \beta) = (3,1)$ (bottom right).}
\label{fig:p}
\end{figure}

\section{Concluding remarks}

To model the evolution of \emph{Chlamydia} populations we propose 
a novel Galton--Watson branching model, where the state-dependent 
offspring distribution is determined by solving a stochastic
optimization problem. The only input of the process is a 
death-rate function $d$, which describes the probability that 
the host cell dies in a given state.

Choosing a natural death-rate function our simulation study
shows that the process captures the asynchronous conversion
property of the population, which was recently found 
experimentally in \cite{Lee}. Moreover, our simulated data 
fits extremely well to the real measurements in \cite{Lee}, 
see Table \ref{tab:1}.
To the best of our knowledge,
this is the first mathematical model which reproduces this phenomena.

The cause of the host cell’s death is not yet well-understood.
Experiments suggests that \emph{chlamydia}
controls host cell survival, as an early death would be disadvantageous
to the bacterial population, see \cite[p.~394]{Elwell}.
However, the amount of bacteria in the host cell definitely has
a strong effect. It is not clear which form of the bacteria is 
more harmful to the host cell, since RBs are larger physically, while 
EBs secrete chemicals.
Varying the relative effect of RBs and EBs on the 
death time of the host cell, our simulation studies suggest 
that RBs and EBs have the same effect  on the host cell's death.

\bigskip

\noindent \textbf{Acknowledgements.}
We are grateful to Dezs\H{o} Virok for explaining us the 
necessary biology. PK is partially supported 
by the J\'anos Bolyai Research Scholarship of the Hungarian
Academy of Sciences. MSz is supported by the \'UNKP-22-3-SZTE-457 new national excellence 
program of the ministry for culture and innovation from the source of the national 
research, de\-ve\-lop\-ment and innovation fund.
This research was supported by the Ministry of Innovation and
Technology of Hungary from the National Research, Development
and Innovation Fund, project no.~TKP2021-NVA-09.


\begin{thebibliography}{1}

\bibitem{Bogdanov}
Anita Bogdanov, P\'{e}ter Kevei, M\'{a}t\'{e} Szalai, and Dezs\H{o} Virok.
\newblock Stochastic modeling of in vitro bactericidal potency.
\newblock {\em Bull. Math. Biol.}, 84(1):Paper No. 6, 18, 2022.

\bibitem{Elwell}
C.~Elwell, K.~Mirrashidi, and J.~Engel.
\newblock Chlamydia cell biology and pathogenesis.
\newblock {\em Nat Rev Microbiol}, 14(24):385--400, 2016.

\bibitem{Enciso}
German Enciso, Christine S\"{u}tterlin, Ming Tan, and Frederic Y.~M. Wan.
\newblock Stochastic chlamydia dynamics and optimal spread.
\newblock {\em Bull. Math. Biol.}, 83(4):Paper No. 24, 35, 2021.

\bibitem{Haccou}
Patsy Haccou, Peter Jagers, and Vladimir~A. Vatutin.
\newblock {\em Branching processes: variation, growth, and extinction of
  populations}, volume~5 of {\em Cambridge Studies in Adaptive Dynamics}.
\newblock Cambridge University Press, Cambridge; IIASA, Laxenburg, 2007.

\bibitem{Hernandez}
On\'{e}simo Hern\'{a}ndez-Lerma and Jean~Bernard Lasserre.
\newblock {\em Discrete-time {M}arkov control processes}, volume~30 of {\em
  Applications of Mathematics (New York)}.
\newblock Springer-Verlag, New York, 1996.

\bibitem{Kimmel}
Marek Kimmel and David~E. Axelrod.
\newblock {\em Branching processes in biology}, volume~19 of {\em
  Interdisciplinary Applied Mathematics}.
\newblock Springer, New York, second edition, 2015.

\bibitem{Lee}
J.K. Lee, G.A. Enciso, D.~Boassa, C.N. Chander, T.H. Lou, Pairawan S.S., M.C.
  Guo, Wan F.Y.M., M.H. Ellisman, C.~S\"utterlin, and M.~Tan.
\newblock Replication-dependent size reduction precedes differentiation in
  chlamydia trachomatis.
\newblock {\em Nat Commun.}, 45(9):3884--3891, 2018.

\bibitem{WanEnciso}
Frederic Y.~M. Wan and Germ\'{a}n~A. Enciso.
\newblock Optimal proliferation and differentiation of {\it {c}hlamydia
  trachomatis}.
\newblock {\em Stud. Appl. Math.}, 139(1):129--178, 2017.

\bibitem{Wilson}
D.~P. Wilson.
\newblock Mathematical modelling of chlamydia.
\newblock {\em ANZIAM J.}, 45((C)):C201--C214, 2003/04.

\end{thebibliography}

\end{document}